\documentclass{article}
\usepackage{psfrag,graphicx,amssymb,amsmath,theorem}
\newtheorem{theorem}{Theorem}
\newtheorem{lemma}[theorem]{Lemma}
\newtheorem{cor}[theorem]{Corollary}

\def\QED{\ensuremath{{\square}}}
\def\markatright#1{\leavevmode\unskip\nobreak\quad\hspace*{\fill}{#1}}
\newenvironment{proof}
  {\begin{trivlist}\item[\hskip\labelsep{\bf Proof.}]}
  {\markatright{\QED}\end{trivlist}}

\newcommand{\Conv}{\mathit{Conv}}
\title{Hamiltonian Tetrahedralizations with Steiner Points}

\author{
Francisco Escalona\footnotemark[1] \and
Ruy Fabila-Monroy\footnotemark[2] \and
Jorge Urrutia  \footnotemark[2]
}

\index{Escalona, Francisco}
\index{Fabila-Monroy, Ruy}
\index{Urrutia, Jorge}

\begin{document}
\maketitle

\begin{abstract}

Let $S$ be a set of $n$ points in $3$-dimensional 
space. A tetrahedralization 
$\mathcal{T}$ of $S$ is a set of interior 
disjoint tetrahedra with vertices on $S$, 
not containing points of $S$ in their interior, 
and such that their union is the convex hull of $S$.
Given $\mathcal{T}$, $D_\mathcal{T}$ is
defined as the graph having as vertex set
the tetrahedra of $\mathcal{T}$, two of which are adjacent
if they share a face. We say that $\mathcal{T}$ 
is Hamiltonian if  $D_\mathcal{T}$ has
a  Hamiltonian path. Let $m$ be the number
of convex hull vertices of $S$.
We prove that by adding
at most $\lfloor \frac{m-2}{2} \rfloor$ 
Steiner points to interior of the convex hull of $S$, 
we can obtain a point
set that admits a Hamiltonian tetrahedralization.
An $O(m^\frac{3}{2}) +  O(n \log n)$  time
algorithm to obtain these points is given.
We also show that all point sets with at most $20$
convex hull points admit a Hamiltonian tetrahedralization
without the addition of any Steiner points.
Finally we exhibit a set of $84$ points that does not
admit a Hamiltonian tetrahedralization in which all
tetrahedra share a vertex.
\end{abstract}

\renewcommand{\thefootnote}{\fnsymbol{footnote}}

\footnotetext[1]{Facultad de Ciencias, Universidad Nacional Aut\'onoma de M\'exico}
\footnotetext[2]{Instituto de Matem\'aticas, Universidad Nacional Aut\'onoma 
de M\'exico (\texttt{ruy@ciencias.unam.mx}, 
\texttt{urrutia@math.unam.mx}). Supported by
CONACYT of Mexico, Proyecto SEP-2004-Co1-45876, and 
PAPIIT (UNAM), Proyecto IN110802.}

\renewcommand{\thefootnote}{\arabic{footnote}}

\section{Introduction}

All point sets considered throughout this paper will be in general position
in $\mathbb{R}^2$ and $\mathbb{R}^3$.
This are point sets such that: in $\mathbb{R}^2$ not three
of its elements are colinear and in $\mathbb{R}^3$ not
four of its elements are coplanar.\par

Let $S$ be a set of $n$ points in $\mathbb{R}^3$. 
A tetrahedralization $\mathcal{T}$ of $S$ is a set 
of tetrahedra with vertices in $S$, such that :

\begin{enumerate}
\item Their union is the convex hull of $S$.

 \item The tetrahedra only intersect at points, lines
or faces.

    \item The tetrahedra do not contain points of $S$ in
their interior.
\end{enumerate}

Given $\mathcal{T}$, we define the dual
graph of $\mathcal{T}$, $D_\mathcal{T}$ to
be the graph whose vertices are the elements of 
$\mathcal{T}$, two of which are adjacent if they
share a common face.\par

In a similar way a triangulation of a point set of points  $S$ 
in the plane is set of interior 
disjoint triangles
with vertices on $S$, not containing points of $S$
and such that their union is the convex hull of $S$.\par

Again, the dual graph $D_\mathcal{T}$ of $\mathcal{T}$ is the graph having
 the elements of $\mathcal{T}$ as vertices, two of which
are adjacent if they share and edge.\par

A Hamiltonian path (cycle) of a graph $G$ is a path (cycle)
spanning all the vertices of $G$. If $D_\mathcal{T}$ contains
a Hamiltonian path or cycle, we say that $\mathcal{T}$
is a Hamiltonian tetrahedralization (or triangulation, if $S$ is in
the plane).\par

The problem of finding a Hamiltonian triangulation for
a given point set in the plane, has been
settled in both the existential and algorithmic
sense: every set of $n$ points in the plane 
admits a hamiltonian 
triangulation and it can be computed
in time $O(n\log n)$ \cite{skiena, faurr}.

The Hamiltonian Tetrahedralization Problem 
\cite{skiena}, Problem 29 in \cite{open}, 
has been a long standing problem in 
Computational Geometry.
It is not known if every point set in general
position in $\mathbb{R}^3$ admits
a Hamiltonian tetrahedralization.
It is conjectured in \cite{skiena}, that the
problem of finding such a tetrahedralization
is $NP$-hard for arbitrary point sets.\par

Hamiltonian triangulations were initially studied,
among other reasons, because they speed
the rate at which a triangulation can be sent
to a graphic processor for rendering (\cite{skiena, spacetradeoffs}).
A similar speed up applies for tetrahedralizations.\par

In this paper we study the problem of computing
Hamiltonian tetrahedalizations by adding Steiner 
points.\par

We call the points in the interior of the convex hull of 
$S$, \emph{interior points} and the points on the
boundary of the convex hull, \emph{exterior points}
Let $m$ and $m'$ be the number of exterior and
interior points of $S$ respectively; 
we denote the convex hull of $S$ by $Conv(S)$.\par

Our main result is thus:

\begin{theorem}\label{teo:main}
Let $S$ be a set of $n$ points in general position with $m$ exterior points in $\mathbb{R}^3$. Then we need to add at most $\lfloor \frac{m-2}{2} \rfloor$ Steiner points to the interior
of  $\Conv(S)$, so that the resulting point
set admits a Hamiltonian tetrahedralization. Moreover this tetrahedralization can
be found in time $O(m^{\frac{3}{2}})+O(n\log n)$
\end{theorem}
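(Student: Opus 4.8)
The plan is to attack the theorem in three stages: first argue that the interior points of $S$ cost nothing, then build a Hamiltonian tetrahedralization of the convex‑hull vertices together with the Steiner points, and finally reinsert the interior points.

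\textbf{Stage 1: the interior points are free.} Suppose we already have a Hamiltonian tetrahedralization $\mathcal{T}_0$ of the $m$ exterior points of $S$ together with the Steiner points. We insert the $m'$ interior points one at a time. To insert a point $q$, locate the tetrahedron $\tau=abcd$ of the current tetrahedralization that contains $q$ and replace $\tau$ by the four tetrahedra $qabc,\,qabd,\,qacd,\,qbcd$. Any two of these share a face through $q$, so in the dual graph the vertex $\tau$ is replaced by a copy of $K_4$, while every former neighbour of $\tau$ becomes attached to exactly one of the four new vertices (the one carrying the face it shared with $\tau$). A Hamiltonian path of the old dual graph uses at most two faces of $\tau$; since $K_4$ has a Hamiltonian path between any prescribed pair of its vertices (and starting at any vertex), the path can be rerouted through the new $K_4$, so Hamiltonicity is preserved. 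Iterating, all interior points are absorbed without spending a Steiner point, and the point locations can be done with a point‑location structure in $O(n\log n)$ total time. Hence it suffices to treat the case where $S$ is in convex position.

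\textbf{Stage 2: the convex case, via a planar perfect matching.} Now $\partial\Conv(S)$ is a triangulation of the sphere with $2m-4$ faces whose dual graph $G$ is cubic, $3$‑connected and planar; by Petersen's theorem $G$ has a perfect matching $M$ with $|M|=m-2$, computable in $O(m^{3/2})$ time by a planar matching algorithm. Each edge of $M$ is a pair of boundary triangles sharing an edge. Group these $m-2$ pairs into blocks, each an (essentially) four–triangle connected patch $D_i$ of $\partial\Conv(S)$ chosen so that the dual tree of $D_i$ is a path $P_4$; for all but one block place a single Steiner point just inside $\Conv(S)$, close to the patch, and cone $D_i$ from it, producing a shallow ``tent'' of four tetrahedra. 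For the last Steiner point $s^{*}$, place it deep in the interior and cone the entire remaining region $\Conv(S)\setminus(\text{the shallow tents})$ from $s^{*}$; with the tents shallow this region is star‑shaped from a suitable $s^{*}$, so this cone is a valid tetrahedralization. Counting, this uses at most $\lfloor\frac{m-2}{2}\rfloor$ Steiner points, one unit of slack absorbing the parity of $m$ (the residual region then contains four or six leftover boundary triangles).

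\textbf{Stage 3: threading the Hamiltonian path, and the main obstacle.} Inside tent $i$ the four cone tetrahedra form a path $P_4$; its two endpoints are cones over the ``leaf'' triangles of $D_i$, and each such tetrahedron has two further faces of the form $s_i\times(\text{boundary edge of }D_i)$, which lie on the boundary of the central region. Hence each tent can be entered and left through the central region, and the central region's dual graph $G^{*}$ is just the dual of its boundary triangulation. The task becomes: find a Hamiltonian path of $G^{*}$ that, in an appropriate order, detours into each of the $\approx\frac{m-2}{2}$ tents through one portal of a leaf tetrahedron and returns through another, and also covers the few leftover boundary triangles. This simultaneous bookkeeping is the hard part: a generic cubic graph's dual is not Hamiltonian, so we cannot ask $G^{*}$ alone to carry everything --- what rescues us is the extra adjacencies produced by the Steiner cones and the fact that each tent offers two portals at each end and can even be traversed in two separate runs. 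Making this robust will require choosing the matching $M$, the partition into blocks, and the positions of the Steiner points so that the central region is star‑shaped and the portals around each tent occur in a convenient cyclic arrangement; the clean way to organize the proof is to fix an explicit order on the blocks around $\partial\Conv(S)$ and build the path greedily (or inductively in $k$) along it, possibly routing a Hamiltonian cycle rather than a path. Finally, the running time is $O(n\log n)$ for the convex hull of $S$ and the point location of Stage 1, plus $O(m^{3/2})$ for the matching, all remaining combinatorial steps being linear in their size, which gives the claimed $O(m^{3/2})+O(n\log n)$ bound.
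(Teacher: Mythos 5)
Your Stage 1 matches the paper's Lemma~\ref{obsconv} and is fine, and you correctly identify Petersen's theorem applied to the cubic $3$-connected dual of the boundary triangulation as the combinatorial engine. But from there the argument has two genuine gaps. First, in Stage 2 you assert that the $m-2$ matched pairs of boundary triangles can be grouped into connected four-triangle patches whose dual is a path $P_4$; a perfect matching gives you no such structure (two matched pairs need not be adjacent, and there is no reason the pairs admit a pairing into adjacent pairs), so the ``tent'' decomposition is unsupported. The accompanying geometric claims (tents shallow enough that the residual region is star-shaped from a single deep point) are also only asserted. Second, and decisively, Stage 3 does not prove anything: you explicitly flag the construction of the Hamiltonian path threading the central region and all the tents as ``the hard part'' and describe what a proof ``will require.'' Since that threading \emph{is} the theorem, the proposal stops exactly where the proof has to begin.

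The paper avoids this difficulty by using the matching in the opposite way. It first deletes interior points \emph{and} degree-$3$ hull vertices (both reinsertable while preserving Hamiltonicity), then cones the whole boundary from a single interior Steiner point $p_0$, so the dual of the resulting tetrahedralization is itself the cubic $3$CCP graph $G$ with $2m-4$ vertices. Removing a perfect matching from $G$ leaves a $2$-factor, i.e.\ a partition of all tetrahedra into vertex-disjoint cycles; each further Steiner point performs a local ``join'' that replaces two face-adjacent tetrahedra lying on different cycles by six tetrahedra and merges the two cycles into one. Because degree-$3$ hull vertices were removed, $G$ has no triangles, every cycle of the $2$-factor has length at least $4$, so there are at most $\lfloor\frac{2m-4}{4}\rfloor=\lfloor\frac{m-2}{2}\rfloor$ cycles and hence at most $\lfloor\frac{m-2}{2}\rfloor-1$ joins; together with $p_0$ this gives the stated bound, with no global routing argument needed. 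If you want to salvage your outline, you essentially have to rediscover this cycle-merging mechanism: it converts the intractable ``find one long path'' problem into the trivial ``merge an existing cycle cover one pair at a time'' problem.
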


We note that the Steiner points are added as interior points of $S$, it is not
hard to see that if this requirement is dropped a Hamiltonian tetrahedralization
can be found by adding only two Steiner points.\par

The paper is organized as follows:\par

In Section \ref{join} we present an algorithm that adds at most $\lfloor \frac{m-2}{2} \rfloor$
Steiner points,  to the interior of $\Conv(S)$.
Our algorithm produces a Hamiltonian
tetrahedralization.
The overall complexity
of the algorithm is $O(m^\frac{3}{2}) +  O(n \log n)$. We consider its complexity
and implementation issues in Section \ref{alg}.
In Section \ref{section:3ccp} we study the dual graph of the convex hull
of $S$. We show that all point sets with at most $20$ exterior points 
admit a Hamiltonian path tetrahedralization. In the same section a lower bound 
on the number of Steiner points our algorithm might add is given. With the same techniques
we improve on the result of  \cite{isora} and exhibit a point set of $84$ elements
that does not admit a Hamiltonian pulling tetrahedralization (a pulling tetrahedralization
is a tetrahedralization in which all tetrahedra share a point).
Finally in Section \ref{section:conclusions} a summary of results is given and new directions offered.
\par

\section{The algorithm}\label{join}

In this section we sketch an algorithm that achieves Theorem \ref{teo:main}.\par

The main idea of our algorithm is to first add a point to $S$  to obtain
a tetrahedralization such that its dual graph can be
partitioned into cycles.\par

Steiner points are then inserted to join existing 
cycles. We continue this process until
the cycle partition consists of just one cycle. This final
cycle is a Hamiltonian cycle in the dual graph of the final
tetrahedralization.\par

The first step is to remove the interior points and 
those exterior points of degree $3$ (that is, points
adjacent to $3$ other points in the boundary of $\Conv(S)$).
We can do this in view of the following: 

\begin{lemma}\label{obsconv}
  If the set of exterior points of $S$ admit a Hamiltonian 
  tetrahedralization, so does $S$.
 \end{lemma}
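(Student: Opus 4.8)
The plan is to prove this by an incremental argument: start from a Hamiltonian tetrahedralization $\mathcal{T}_0$ of the exterior points $E$ of $S$, and then insert the interior points and the degree-$3$ exterior points back one at a time, each time repairing the tetrahedralization so that it remains Hamiltonian.

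First I would handle the insertion of an interior point $p$. Since $\mathcal{T}_0$ tetrahedralizes $\Conv(E) = \Conv(S)$ and $p$ lies in the interior, $p$ lies in the interior of some tetrahedron $\Delta$ of $\mathcal{T}_0$ (general position rules out $p$ lying on a shared face). I would replace $\Delta$ by the four tetrahedra obtained by coning $p$ to the four faces of $\Delta$. In the dual graph, the vertex $\Delta$ is replaced by a $4$-clique $K_4$, and the (at most four) former neighbours of $\Delta$ are now each attached to exactly one of these four new tetrahedra. A Hamiltonian path of $D_{\mathcal{T}_0}$ passed through $\Delta$ using at most two of its incident faces; locally the path entered and left through two specified vertices of the new $K_4$, so I route it through all four new vertices using a Hamiltonian path of $K_4$ between those two prescribed endpoints (and if $\Delta$ was an endpoint of the original path, I use a Hamiltonian path of $K_4$ starting at the prescribed vertex). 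Splicing this local detour into the global path yields a Hamiltonian path of the new dual graph.

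Next I would handle a degree-$3$ exterior vertex $v$, i.e.\ a vertex adjacent along $\partial\Conv(S)$ to exactly three other boundary points $a,b,c$. The three triangles $vab, vbc, vca$ bound, together with $abc$, a tetrahedron $vabc$ whose interior is disjoint from $\Conv(E \setminus \{v\})$; in fact $\Conv(S) = \Conv(E) = \Conv(E\setminus\{v\}) \cup \{vabc\}$ with the two pieces meeting in the face $abc$. Given a Hamiltonian tetrahedralization of $E \setminus \{v\}$ — where $abc$ is a boundary face lying on some tetrahedron $\Delta'$ — I simply add the tetrahedron $vabc$; its only neighbour in the new dual graph is $\Delta'$, so it is a degree-$1$ vertex. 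I then extend a Hamiltonian path of the old dual graph: if $\Delta'$ was an endpoint of that path, just append $vabc$; otherwise I first re-route using the freedom to choose the path's endpoint at a tetrahedron incident to a boundary face — which requires a small additional bookkeeping argument, or alternatively I process all degree-$3$ vertices last and argue that the hull always retains at least one ``free'' boundary tetrahedron to serve as a path endpoint.

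The main obstacle is this last endpoint-management issue: attaching a pendant tetrahedron only extends a Hamiltonian path if the attachment point is (or can be made) an endpoint of the current path. The clean fix is to strengthen the inductive hypothesis — prove that one can always find a Hamiltonian \emph{path} in the dual graph \emph{with a specified endpoint}, where the specified endpoint is any tetrahedron having a face on the convex hull. For interior-point insertion this stronger statement is preserved automatically (the coned point is interior, so it does not affect boundary tetrahedra, and the $K_4$ gadget is flexible enough to realise any prescribed single endpoint). For degree-$3$ vertex insertion, we choose the new pendant $vabc$ itself — which has three hull faces — as the specified endpoint of the extended path, keeping the hypothesis intact. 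Carrying this reinforced hypothesis through both insertion types gives the lemma; note we only need a Hamiltonian path, never a cycle, so the $K_4$ and pendant gadgets suffice.
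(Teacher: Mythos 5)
Your argument for inserting an interior point --- locate the tetrahedron $\Delta$ containing it, cone the point to the four faces of $\Delta$, and splice a Hamiltonian path of the resulting $K_4$ (between the one or two prescribed attachment vertices) into the global path --- is exactly the paper's proof of this lemma, and it is correct. However, you have misread the scope of the statement: the hypothesis is that the set of \emph{all} exterior points of $S$ (including those of degree $3$ on the hull) admits a Hamiltonian tetrahedralization, so the only points you need to reinsert are the interior ones. The entire second half of your proposal, and in particular what you call the ``main obstacle'' of endpoint management for pendant tetrahedra, is not part of this lemma at all; removing and reinserting degree-$3$ hull vertices is the subject of the paper's separate Lemma \ref{theo1}. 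It is worth noting that even there your pendant-vertex approach would be fragile (a degree-$1$ vertex in the dual must be an endpoint of any Hamiltonian path, and your strengthened hypothesis with one specified endpoint does not survive attaching a second pendant, which would force a path with two prescribed endpoints); the paper avoids this by retetrahedralizing the convex union $\tau_1\cup\tau_2$ into three tetrahedra sharing the reinserted vertex, which replaces a dual vertex by a triangle rather than hanging a leaf. For the statement actually at hand, your proof is correct and coincides with the paper's.
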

 
\begin{proof}
Consider an interior point $x$ of $S$ and suppose 
$S-\{x\}$ admits a Hamiltonian tetrahedralization $\mathcal{T}$.
Let $\tau$ be the unique tetrahedron of $\mathcal{T}$ that 
contains $x$ in its interior. If we remove $\tau$ from
$\mathcal{T}$ and add the four tetrahedra induced by
the faces of $\tau$ with $x$, 
we obtain a tetrahedralization of $S$ and 
the Hamiltonian cycle of  $D_{\mathcal{T}}$ can be extended 
to a Hamiltonian cycle of the new tetrahedralization.
Applying this process recursively the result follows.
\end{proof} 

Assume thus that $S$ does not have interior points.

\begin{lemma} \label{theo1}
Let $x$ be an exterior point of $S$ of degree $3$. If 
$S-\{x\}$ admits a Hamiltonian tetrahedralization, then so does
$S$.
\end{lemma}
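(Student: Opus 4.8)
The plan is to obtain $S$ from $S-\{x\}$ by gluing one tetrahedron along a facet, and then to replace that tetrahedron together with its unique neighbour by a second triangulation of their union, so that any Hamiltonian path or cycle of the old dual graph survives the surgery.

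First I would record the local picture at a degree-$3$ hull vertex. Let $a,b,c$ be the neighbours of $x$ on $\partial\Conv(S)$. Since $S$ is in general position, the facets of $\Conv(S)$ at $x$ are exactly the triangles $xab$, $xbc$, $xca$, and deleting $x$ merges them into the single triangular facet $abc$ of $\Conv(S-\{x\})$; writing $\tau=\Conv\{x,a,b,c\}$ we get $\Conv(S)=\Conv(S-\{x\})\cup\tau$, the two solids meeting exactly in $abc$. Recall that we have already reduced to the case in which $S$ has no interior points, so $\tau$ contains no point of $S$ besides $x,a,b,c$; hence $\mathcal T'\cup\{\tau\}$ is a tetrahedralization of $S$ for every tetrahedralization $\mathcal T'$ of $S-\{x\}$.

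Now fix a Hamiltonian tetrahedralization $\mathcal T'$ of $S-\{x\}$ and let $\sigma=\Conv\{a,b,c,d\}$ be the unique tetrahedron of $\mathcal T'$ with facet $abc$, where $d\in S-\{x\}$ lies strictly on the $\Conv(S-\{x\})$ side of the plane of $a,b,c$. Simply attaching $\tau$ to $\sigma$ as a pendant vertex of the dual graph is not enough, because hanging a leaf from an internal vertex of a Hamiltonian path destroys it. The crucial geometric observation is that the segment $xd$ crosses the relative interior of triangle $abc$: in a Radon partition of the five points $x,a,b,c,d$ the two parts cannot have sizes $4$ and $1$ --- $d$ cannot lie in $\operatorname{int}\tau$ since $\operatorname{int}\tau\subseteq\operatorname{int}\Conv(S)$ and $S$ has no interior points, $x$ cannot lie in $\Conv\{a,b,c,d\}$ since it is on the far side of the plane of $abc$, and none of $a,b,c$ can lie in the interior of a sub-tetrahedron on the remaining four points since they are vertices of $\Conv(S)$ --- so the partition is $\{x,d\}$ against $\{a,b,c\}$.

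Consequently the union $B=\tau\cup\sigma=\Conv\{x,a,b,c,d\}$ also admits the triangulation $\{xdab,\,xdbc,\,xdca\}$ of three tetrahedra sharing the edge $xd$, and I put $\mathcal T=(\mathcal T'\setminus\{\sigma\})\cup\{xdab,xdbc,xdca\}$. In $D_{\mathcal T}$ these three tetrahedra form a triangle, and each of them is adjacent to at most one further tetrahedron, namely the former neighbour of $\sigma$ across $abd$, $bcd$ or $cad$ respectively (its fourth facet $abc$ was on the boundary, so $\sigma$ had at most three neighbours in $D_{\mathcal T'}$ to begin with). Thus $D_{\mathcal T}$ arises from $D_{\mathcal T'}$ by splitting the vertex $\sigma$ into a triangle. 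A Hamiltonian path or cycle of $D_{\mathcal T'}$ uses at most two of the three dual edges at $\sigma$, so it avoids at least one of the three replacement tetrahedra; since $K_3$ is Hamiltonian-connected one can reroute through the triangle so as to pass through all three, turning a Hamiltonian path into a Hamiltonian path and a Hamiltonian cycle into a Hamiltonian cycle of $D_{\mathcal T}$, and the lemma follows. The one delicate point --- and the reason for the detour through the second triangulation of $B$ and the Radon argument that provides it --- is precisely this need to accommodate the path case, not just the cycle case.
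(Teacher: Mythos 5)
Your construction coincides with the paper's: glue the tetrahedron $\tau=xabc$ onto the facet $abc$ of $\Conv(S-\{x\})$, take the unique tetrahedron $\sigma=abcd$ of $\mathcal{T}'$ incident to that facet, and re-triangulate the bipyramid $\tau\cup\sigma$ into the three tetrahedra $xdab$, $xdbc$, $xdca$ around the edge $xd$. Your dual-graph analysis (the vertex $\sigma$ is expanded into a triangle, each new tetrahedron inherits at most one of $\sigma$'s at most three old neighbours, and a Hamiltonian path or cycle can be rerouted through the $K_3$) is correct and is in fact more careful than the paper, which merely asserts that the Hamiltonian cycle ``can be extended.''

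There is, however, a gap in the step you yourself flag as the crux. Ruling out the five Radon partitions of type $4{+}1$ does not entail that the $2{+}3$ partition is $\{x,d\}$ versus $\{a,b,c\}$: there are ten partitions of type $2{+}3$, and your argument says nothing about the other nine. This matters, because for some of them (e.g.\ $\{a,b\}$ versus $\{x,c,d\}$) the points $x$ and $d$ are still separated by the plane of $abc$, yet $xd$ misses the triangle $abc$ and the three replacement tetrahedra would not triangulate $\tau\cup\sigma$. The conclusion is true, but the clean way to get it uses the degree-$3$ hypothesis directly rather than Radon: $xab$, $xbc$ and $xca$ are facets of $\Conv(S)$, so every point of $S$ --- in particular $d$ --- lies strictly inside the three corresponding half-spaces, i.e.\ in the open cone with apex $x$ over the triangle $abc$; since $d$ also lies strictly on the opposite side of the plane $abc$ from $x$, the segment $xd$ exits $\tau$ through the relative interior of $abc$ (general position keeps the crossing point off the edges). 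With that substitution the proof is complete; note that the paper itself elides exactly this point with the phrase ``clearly $\tau_1\cup\tau_2$ is convex.''
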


\begin{proof}
Suppose $S-\{x\}$ admits a Hamiltonian tetrahedralization 
$\mathcal{T}$. The three convex hull vertices of $S$ 
adjacent to $x$ form a face  $F$ of the boundary of
 $\Conv(S-\{x\})$.
Let $\tau_1$ be the only tetrahedron of $\mathcal{T}$ that 
contains $F$ as a face and let $\tau_2$ be the 
tetrahedron induced by $x$ and $F$. Clearly $\tau_1 \cup \tau_2$
is convex.
If we remove $\tau_1$ and  $\tau_2$ from $\mathcal{T}$
 and replace them
with the three tetrahedra induced by the faces of $\tau_1$
(except $F$) and $x$, we obtain a tetrahedralization 
$\mathcal{T}'$ of $S$. The Hamiltonian cycle of 
$D_\mathcal{T}$ can now be extended to a Hamiltonian cycle
of $D_{\mathcal{T}'}$.
\end{proof}

Assume now  that $S$ does not contains 
exterior points of degree $3$.\par

We insert a point $p_0$ in the interior of $\Conv(S)$ and join
every face of the boundary of $\Conv(S)$ to $p_0$, forming 
a tetrahedralization $\mathcal{T}$ of $S \cup \{ p_0\} $.\par

Let $G$ be the graph induced by the $1$-skeleton of 
the boundary of $\Conv(S)$, that is, the graph whose vertex set
consists of the exterior points of $S$ and whose edges are the edges
of the boundary of $\Conv(S)$. It is easy to see that both $G$ and its dual graph are planar 
and $3$-connected.
By construction, the dual graph of $G$ is isomorphic to 
$D_\mathcal{T}$. Since every face of $G$ is a triangle, 
$D_\mathcal{T}$ is a regular graph of degree $3$.\par

To obtain the initial partition, we use a theorem of
Petersen \cite{petersen} that states that every $2$-connected
cubic graph contains a perfect matching. Since
$D_\mathcal{T}$ is $3$-connected, in particular it is
 $2$-connected and therefore contains a perfect matching $M$. 
If we remove the edges of $M$ from $D_\mathcal{T}$, we obtain
a regular graph of degree $2$. This subgraph of $D_\mathcal{T}$
is the initial cycle partition.\par

\begin{figure} \label{steinerfig}
  \begin{center}
    \includegraphics[width=0.80\textwidth]{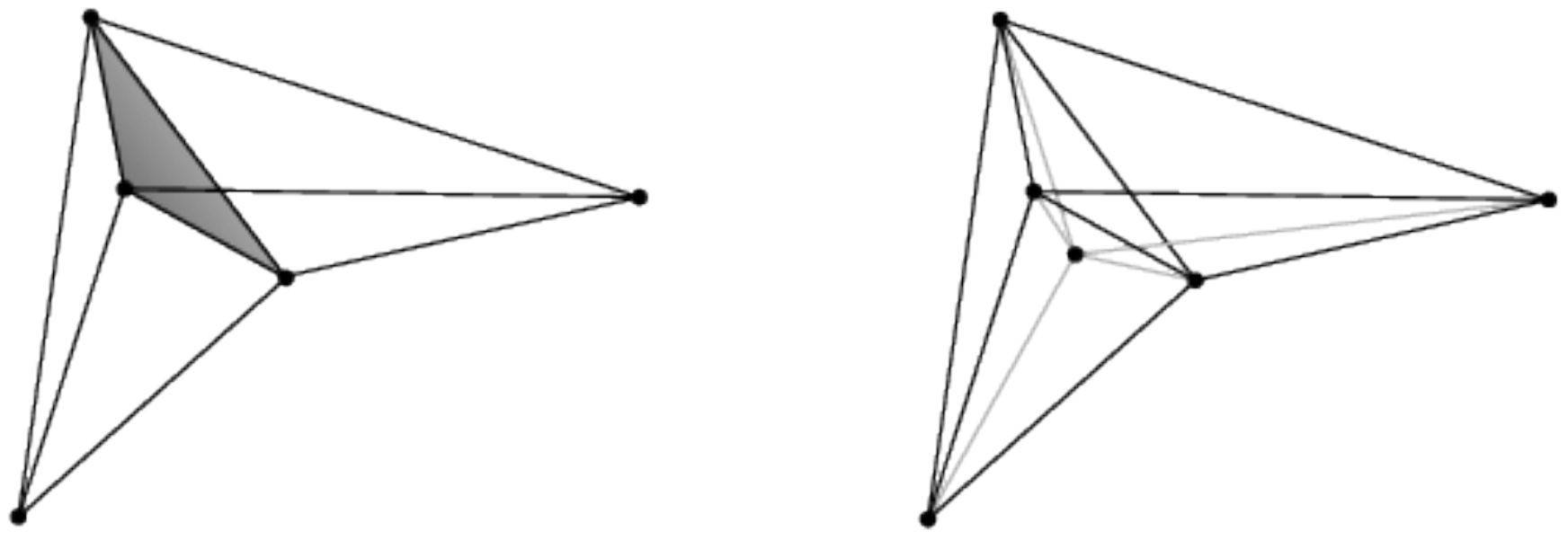}
  \end{center}
    \caption{Join Operation.}
\end{figure}

\subsection{Joining cycles} \label{joincycles}

Consider two disjoint cycles,
$C_1$ and $C_2$, in our cycle partition of $D_\mathcal{T}$, and 
 suppose that there is an edge $e$ of $D_\mathcal{T}$
 that has its end points $\tau_1$ and 
$\tau_2$  in $C_1$ and $C_2$ 
respectively. Since $\tau_1$ and $\tau_2$ are tetrahedra in
$\mathcal{T}$, $e$ corresponds to a shared face $F$ of 
$\tau_1$ and $\tau_2$.\par

The join operation consists of adding a point $p$ to 
the interior of $\tau_1$ so that the line segment
joining the point $q$ in $\tau_2$ opposite to $F$ in $\tau_2$
intersects $F$. We now remove  $\tau_1$ and
$\tau_2$ and replace them by the six tetrahedra 
induced by the faces of  $\tau_1$,
$\tau_2$ and $p$ (except $F$) as shown in  Figure \ref{steinerfig}.

$C_1$ and $C_2$ are joined into a cycle passing the tetrahedra of 
$C_1 \cup C_2 - \{\tau_1, \tau_2\}$ plus
the six new tetrahedra containing $p$ as a vertex
(see Figure 2).\par

\begin{figure} \label{dualsteiner}
  \begin{center}
	\psfrag{1}[][][1]{$C_1$}
    \psfrag{2}[][][1]{$C_2$}
    \psfrag{c}[][][1]{$C$}					
	\includegraphics[width=0.5\textwidth]{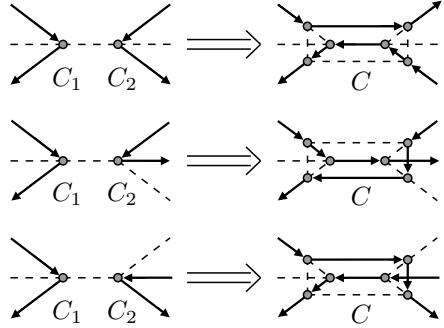}
  \end{center}
    \caption{$D_\mathcal{T}$ before and after the join operation.}
\end{figure}

We repeat this process until a single cycle is obtained.
We will show in the next section that the number of 
Steiner points we need to insert before  a Hamiltonian
cycle is reached is at most $\lfloor \frac{m-2}{2} \rfloor$.

\section{Complexity and implementation.} \label{alg}

In this section we will analyze the running time and 
implementation issues of the algorithm sketched in Section 
\ref{join}.\par

Suppose now that $S$ is a point set with $n$ points in
$\mathbb{R}^3$ with $m$ exterior points and $m'$ interior 
points, $m+m'=n$.
We first calculate the convex hull of 
$S$ in $O(n \log n)$, and then remove the interior points of $S$.\par

Next, we remove the exterior vertices of degree $3$.
This can be done in $O(m)$ by using a priority queue with
all exterior points of degree $3$. Each time one is removed,
the degree of its neighbors is checked and if necessary they
are added to the queue.\par

Adding the first Steiner point $p_0$ and tetrahedralizing 
as in the previous section takes time $O(m)$.\par

The complexity of finding the initial cycle partition described
at the end of Section~\ref{join}
is that of finding 
a perfect matching in $G$. In a graph with $|V|$ vertices and $|E|$ 
edges, a perfect matching can be found  in time 
$O(|E| \sqrt{ |V|})$~\cite{micali}. Since we are dealing
with a cubic graph, we have $|E|=\frac{3}{2} |V|$. Thus we can 
find the initial cycle cover in 
$O(\frac{3}{2} m \sqrt{m})=O(m^\frac{3}{2})$ time.\par

Once we have the initial cycle cover, we return the vertices 
that were removed. This is done before the 
join operations in order to take advantage of the structure
of the tetrahedralization and return the exterior points of
degree $3$ and interior points efficiently.
Using the fact that $D_\mathcal{T}$ is a planar graph, the 
interior points and exterior points of degree $3$ can be 
added using point location at a cost of $O(\log m)$ per point.
The exterior points
of degree $3$ are added first and the interior points afterwards.
As these points are returned, the initial cycle partition is
updated as in Lemma~\ref{obsconv} and Theorem~\ref{theo1}.\par

We have to be careful about the order 
in which the interior points are added.
Suppose we have a tetrahedra $\tau$ which contains $k$ 
interior points that remain to be added, and that 
we return one of these points.  When we 
retetrahedralize the point set, $\tau$ would be 
split  into $4$ new tetrahedra.
We have to guarantee that each of these tetrahedra 
receives a linear fraction of the points
in $\tau$, for otherwise the 
iterative process could take as much as 
$O(k^2)$. That is, we need a splitter vertex (see~\cite{avis}).
Such a vertex can be found in time $O(k)$, thus ensuring a 
total of $O(m'\log m)$ running time.\par

Finally we proceed to merge the set of cycles obtained so
far into a single cycle as in Subsection~\ref{joincycles}.
Each time we join two cycles, we insert one Steiner point.
Since $G$ has $m$ vertices, the number of faces of $G$
is $2m-4$, and since all the cycles obtained have at least
four vertices, the initial cycle partition contains at most 
$\lfloor \frac{2m-4}{4} \rfloor$ elements. Thus the number of
Steiner points required is at most $\lfloor \frac{m-2}{2} \rfloor$.
Since there are $O(m)$ edges 
in $G$ the merging of the cycles can be done in time $O(m)$.
The overall complexity of the algorithm is thus 
$O(m^\frac{3}{2}) +  O(n \log n)$. From this and all previous observations
we obtain Theorem \ref{teo:main}.\par


\section{$3$-connected cubic planar graphs}\label{section:3ccp}

To conclude the paper we study tetrahedralizations of 
point sets in terms of the dual graph of
the 1-skeleton of their convex hull. The $1$-skeleton of 
the convex hull of $S$ is the graph having as vertices
the exterior points of $S$, two of which are adjacent
if they are joined by an edge in the boundary of $Conv(S)$.
 In particular
we prove that every point set of at most $20$ exterior
points admits a Hamiltonian tetrahedralization. Also
a set of $84$ points that does not contain a Hamiltonian pulling tetrahedralization
is shown. Improving therefore on the result of \cite{isora}. 
The techniques employed, allow us to give a lower bound on the number
of Steiner points our algorithm might add.\par

The convex hull of a point set in $\mathbb{R}^3$ 
is a convex polyhedron of triangular faces. It is known that 
the dual graph of such a polyhedron is a $3$-connected cubic planar graph ($3$CCP).
The converse is also true, that is every  $3$CCP graph can be
realized as the dual graph of some polyhedron and therefore as the 
dual graph of the $1$-skeleton of the convex hull of a point set in convex
position  \cite{steinitz}.\par

$3$CCP graphs are uniquely embeddable in the plane \cite{whitney}. This
in particular means that faces of a $3$CPP graph are defined regardless of any particular 
embedding. Now, given a polyhedron $\mathcal{P}$, take any point $q$ in the interior 
of a face and do a stereo-graphic projection to a plane not containing $\mathcal{P}$
and such that any line segment joining a point of the plane and $q$ cuts
$\mathcal{P}$ in its interior. This yields a planar embedding of the dual graph 
$D(\mathcal{P})$ of $\mathcal{P}$. $D(\mathcal{P})$ is the graph having as vertices
the faces of $\mathcal{P}$ two of the adjacent if they share an edge.
Note that in this embedding: faces of
$D(\mathcal{P})$ correspond to all the faces of $\mathcal{P}$ containing a given point 
and all the faces of $\mathcal{P}$ containing a point correspond to faces of 
$D(\mathcal{P})$. Since all embeddings are essentially unique we may assume that this is 
always the case for any embedding of $D(\mathcal{P})$.\par

$3$CCP graphs were once conjectured to be Hamiltonian by Tait \cite{tait}, until a
$3$CCP non-Hamiltonian graph of 
$46$ vertices was found by Tutte \cite{tutte}. Nevertheless
using an exhaustive computer search it has been shown that the smallest non-Hamiltonian $3$CCP graphs have $38$ vertices \cite{nonham}, in other words all $3$CCP graphs of at most 36 vertices
have a Hamiltonian cycle.
We use this fact to show that all point sets of at most $20$ elements admit
a Hamiltonian tetrahedralization.\par

At this point it should be stressed that we are making no assumption
on the existence of interior points. Indeed if the dual graph of the
convex hull of a point set contains a Hamiltonian cycle then joining all the vertices
of the convex hull to an interior point would yield a tetrahedralization whose dual graph
is isomorphic to the dual graph of the  and thus would contain a Hamiltonian cycle.

\begin{theorem}\label{eyeham}
Let $G$ be a $3$CCP Hamiltonian graph. There exists a face $F$ 
of $G$ so that $G-F$ contains a Hamiltonian path
\end{theorem}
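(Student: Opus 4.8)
The plan is to exploit the correspondence, described above, between faces of the dual graph $D(\mathcal{P})$ and the bundles of faces of $\mathcal{P}$ meeting at a common vertex. Concretely, let $H$ be a Hamiltonian cycle of the $3$CCP graph $G$. Since $G$ is cubic, every vertex $v$ of $G$ has exactly one incident edge not used by $H$; call these the \emph{chords} of $H$. The cycle $H$ separates the plane (in the unique embedding of $G$) into two regions, an ``inside'' and an ``outside'', and each chord lies entirely in one of the two regions. A standard parity/counting argument (essentially Tait's observation, and the same one used in colouring arguments for Hamiltonian plane cubic graphs) shows that the faces of $G$ on the inside of $H$ are two-coloured by the chords inside, and likewise for the outside; in particular there is at least one face $F$ of $G$ whose boundary contains no chord on one of its sides — equivalently, a face all of whose boundary edges except possibly a controlled few belong to $H$. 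I would pick such an $F$ and argue that deleting it from $G$ leaves a Hamiltonian path.

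The key steps, in order, are as follows. First I would set up the embedding so that $F$ is the outer face, using Whitney's uniqueness of the embedding of a $3$-connected planar graph (quoted above). Second, I would observe that $G - F$ is exactly the graph on the vertices of $G$ with the edges on the boundary of $F$ removed, together with all other edges; since $G$ is a simple polyhedral graph, $F$ is bounded by a cycle of length $k \ge 3$, and removing $F$ deletes those $k$ boundary edges. Third, I would show the Hamiltonian cycle $H$ of $G$ can be chosen, or modified, so that it uses all but exactly one of the $k$ boundary edges of $F$: then $H$ minus that one edge is a Hamiltonian path of $G$, and it survives in $G - F$ provided the one omitted edge is precisely a boundary edge of $F$ not used by $H$. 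The existence of a face with exactly one non-$H$ edge on its boundary is where the counting argument enters: among all faces on a fixed side of $H$, the ones incident to the fewest chords can be taken to have a single chord on their boundary, and a single-chord face on the boundary side adjacent to the outer region is the desired $F$.

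The main obstacle is the third step — guaranteeing a face $F$ such that $H$ restricted to the boundary of $F$ is a path missing exactly one edge, rather than missing several. A face could a priori have two or more of its boundary edges be chords of $H$, in which case deleting that face does not immediately yield a Hamiltonian path. I expect to handle this by an extremal choice: start from \emph{any} Hamiltonian cycle, and if every face has $\ge 2$ boundary chords on each side, derive a contradiction by summing chord-incidences over faces on one side of $H$ and comparing with the number of chords, using Euler's formula for the subgraph enclosed by $H$. Alternatively — and this may be cleaner — I would use the chord structure directly: the chords inside $H$ form a set of non-crossing diagonals of the cycle $H$, hence define a planar subdivision of the disc bounded by $H$ into regions, at least two of which are ``ears'' bounded by a single chord and a sub-path of $H$; one checks that an ear incident to the outer face gives a face $F$ of $G$ with the required property. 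This ear argument is the heart of the proof and the step I would spend the most care on.
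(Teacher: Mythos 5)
Your identification of the right face is correct and matches the paper's: since $G$ is cubic, each vertex lies on exactly one chord of the Hamiltonian cycle $H$, the chords on a fixed side of $H$ are pairwise non-crossing, and an innermost one cuts off an ``ear'' --- a face $F$ bounded by a single chord $e=xy$ together with the subpath $\Gamma$ of $H$ joining $x$ and $y$ on that side. (The paper reaches the same face by taking a chord whose endpoints minimize distance along $H$; one only has to note that at least one side of $H$ carries a chord, which holds because a cubic graph on $n$ vertices has $n/2\ge 2$ chords.) The genuine gap is in your second and third steps, and it stems from misreading the operation $G-F$. In this paper $G-F$ denotes deletion of the \emph{vertices} lying on the boundary of the face $F$, not of its boundary edges: it models the dual graph of the pulling tetrahedralization $T_p$, whose tetrahedra correspond to the faces of $\mathcal{P}$ \emph{not} containing $p$, i.e.\ to the vertices of $D(\mathcal{P})$ not on the face $F_p$. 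Under your edge-deletion reading your own construction does not close: if $H$ uses $k-1$ of the $k$ boundary edges of $F$, then deleting all $k$ boundary edges cuts $H$ into $k-1$ pieces, and the sentence ``$H$ minus that one edge is a Hamiltonian path \dots\ and it survives in $G-F$'' is a non sequitur --- the one boundary edge not used by $H$ is not an edge of $H$, so removing it leaves the cycle $H$ intact, while the $k-1$ boundary edges that $H$ does use are destroyed in any case.

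With the correct reading, the ear you found finishes the proof in one line, exactly as in the paper: the vertices of $F$ are precisely the vertices of $\Gamma$, hence consecutive along $H$, so deleting them from $H$ leaves the complementary arc of $H$, which is a Hamiltonian path through all vertices of $G-F$. You should also drop the opening parity/two-colouring discussion: the target is not a face ``whose boundary contains no chord on one of its sides'' but a face with exactly one chord and all remaining boundary edges on $H$, and the extremal/ear argument you sketch at the end already delivers it without any counting.
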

\begin{proof}

Consider a planar embedding of $G$ and a Hamiltonian
cycle $C$ of $G$. Define the distance of two vertices 
as the minimum length of the two paths joining them in $C$.
Take any edge $e$ in $G$ not in $C$, joining vertices two vertices $x$ and $y$,
 whose distance in $C$ is minimum. The path $\Gamma$ in $C$ joining $x$ and $y$, realizing this
distance together with $e$ forms a face $F$; since any other edge
would join vertices at a strictly less distance.
Now $C$-$\Gamma$ is a Hamiltonian path of $G-F$.
\end{proof}

Take any vertex $p$ of $\mathcal{P}$ and consider the tetrahedralization
$T_p$ formed by joining all other vertices of $\mathcal{P}$ to $p$.
Such tetrahedralizations are known in the literature as ``pulling'' tetrahedralizations.
Let $F_p$ be the corresponding face of $p$ in $D(\mathcal{P})$. $T_p$ is 
isomorphic to $D(\mathcal{P})-F_p$. Note that if $D(\mathcal{P})$ is Hamiltonian, 
Theorem \ref{eyeham} implies the existence of a point $p$ such that $T_p$ is Hamiltonian.\par

By Euler's formula, a $3$CCP graph on $n$ vertices has $\frac{n+4}{2}$ faces.
Since all $3$CCP graphs of $36$ or less vertices are Hamiltonian we have:

\begin{cor}\label{corol}
Every point set in $\mathbb{R}^3$ in general position of at most
$20$ points admits a Hamiltonian path (``pulling'') tetrahedralization.
\end{cor}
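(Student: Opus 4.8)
The plan is to chain together the structural facts assembled just before the corollary. First I would invoke the theorem of \cite{nonham}: every $3$CCP graph on at most $36$ vertices is Hamiltonian. Next, by Steinitz's theorem every point set $S$ in $\mathbb{R}^3$ in general position has a convex hull whose $1$-skeleton is a polytope $\mathcal{P}$, and the dual graph $D(\mathcal{P})$ is a $3$CCP graph. So the strategy is purely a counting argument: bound the number of vertices of $D(\mathcal{P})$ in terms of the number of points, apply the Hamiltonicity bound, and then use Theorem~\ref{eyeham} to pass from a Hamiltonian cycle in $D(\mathcal{P})$ to a Hamiltonian path in some $D(\mathcal{P})-F_p$, which by the discussion preceding the corollary is exactly the dual graph of the pulling tetrahedralization $T_p$.

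The key computation is Euler's formula. If $S$ has $n$ points, its convex hull has at most $n$ exterior vertices; since $\mathcal{P}$ is a simplicial polytope (general position forces triangular faces), counting edges via $3f = 2e$ together with $v - e + f = 2$ gives $f = 2v - 4 \le 2n - 4$. The vertices of $D(\mathcal{P})$ are the faces of $\mathcal{P}$, so $D(\mathcal{P})$ has at most $2n-4$ vertices. For $n \le 20$ this is at most $36$, so \cite{nonham} applies and $D(\mathcal{P})$ is Hamiltonian. (I note the excerpt phrases Euler's formula dually as ``a $3$CCP graph on $n$ vertices has $\frac{n+4}{2}$ faces''; I would reconcile the two forms, but the operative bound is $|V(D(\mathcal{P}))| \le 2n-4 \le 36$.)

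Finally, Theorem~\ref{eyeham} supplies a face $F$ of $D(\mathcal{P})$ with $D(\mathcal{P}) - F$ Hamiltonian-path. By the paragraph just above the corollary, a face of $D(\mathcal{P})$ corresponds to a vertex $p$ of $\mathcal{P}$, and $D(\mathcal{P}) - F_p$ is isomorphic to the dual graph of the pulling tetrahedralization $T_p$ obtained by joining all other vertices of $\mathcal{P}$ to $p$. Hence $T_p$ is a Hamiltonian path (pulling) tetrahedralization of the exterior points of $S$; if $S$ has interior points, Lemma~\ref{obsconv} extends this to a Hamiltonian tetrahedralization of all of $S$ (and one checks the extension in Lemma~\ref{obsconv} preserves, or can be arranged to preserve, the path rather than only a cycle).

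The only real subtlety, and the step I would be most careful about, is the path-versus-cycle bookkeeping: Theorem~\ref{eyeham} yields a Hamiltonian \emph{path}, whereas Lemma~\ref{obsconv} is stated for Hamiltonian \emph{cycles}. Re-examining the proof of Lemma~\ref{obsconv}, the local replacement of one tetrahedron by four and the corresponding local surgery on the dual works verbatim for a path (splice the four new vertices into the path at the position of the removed vertex), so the extension to interior points goes through. Everything else is a direct substitution of the cited results, so no genuinely hard step remains.
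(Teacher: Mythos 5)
Your proof is correct and follows essentially the same route the paper takes: Euler's formula bounds the dual graph of the hull by $2\cdot 20-4=36$ vertices, the result of \cite{nonham} makes it Hamiltonian, and Theorem~\ref{eyeham} converts the cycle into a Hamiltonian path of a pulling tetrahedralization. Your extra care with interior points (checking that the surgery in Lemma~\ref{obsconv} splices four vertices forming a $K_4$ into a \emph{path}, not just a cycle) is a point the paper passes over silently, and it is handled correctly.
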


Although it is a simple observation, Theorem \ref{eyeham} serves as a bridge
between $3$CCP graphs and point sets in space. For example it is known that 
$3$CCP graphs of at most $176$ vertices and face size at most $6$ are Hamiltonian.
For point sets this implies that all sets of at most $90$ vertices and with vertices
of degree at most $6$ in its convex hull admit a Hamiltonian tetrahedralizations.
See \cite{mckay} for various similar results on $3$CCP graphs. 
Also a well known conjecture on $3$CCP graphs states that all bipartite $3$CCP
graphs are Hamiltonian \cite{barnette}.\par

Recently, point sets of $92$ with no Hamiltonian path pulling tetrahedralizations have 
been shown to exist \cite{isora}. We improve on this previous result
and exhibit a set with less points ($84$) without a pulling Hamiltonian
path tetrahedralizations.
Our construction also enable us to find lower bounds on the number of Steiner 
points added by the algorithm presented in section \ref{join} and \ref{alg}.\par

\subsection{Blowing up vertices}

In this section we introduce an operation that will allow us to replace any vertex in a $3$CCP graph
with another $3$CCP graph, so that the resulting graph is again a $3$CCP graph.
Using this operation we will construct $3$CPP graphs with certain desired properties.\par

Let $G$ and $H$ be $3$CCP graphs and $v$ be any vertex of $H$.
We may assume that $H$ is embedded in the plane so that $v$ is a vertex on the exterior face.
Let $N_H(v)=\{v_1,v_2, v_3\}$ be the neighbourhood of $v$. Remove $v$ from 
$H$ and add a path of $4$ new vertices $(v_1',v_2',w,v_3')$. Join $v_i$ with $v_i'$; 
call the resulting graph $H'$, see Figure 3.\par

\begin{figure} \label{con3ccp}
  \begin{center}
    \psfrag{H}[][][1]{$H$}
    \psfrag{H1}[][][1]{$H'$}
    \psfrag{v1}[][][1]{$v_1$}
    \psfrag{v2}[][][1]{$v_2$}
    \psfrag{v3}[][][1]{$v_3$}
    \psfrag{v1p}[][][1]{$v_1'$}
    \psfrag{v2p}[][][1]{$v_2'$}
    \psfrag{v3p}[][][1]{$v_3'$}
    \psfrag{w}[][][1]{$w$}
    \psfrag{v}[][][1]{$v$}
    \includegraphics[width=0.8\textwidth]{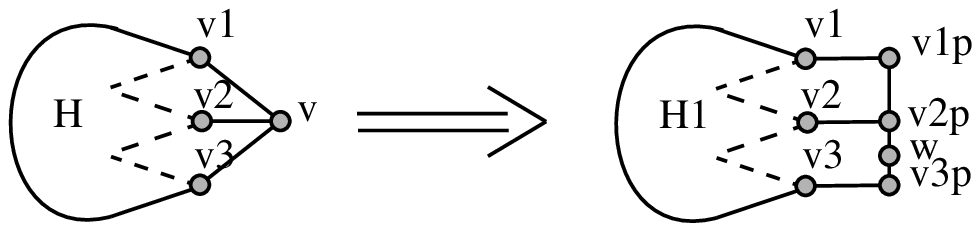}
  \end{center}
    \caption{}
\end{figure}

Although $H'$ is no longer cubic, it can however be used to replace any vertex $u$ of $G$:
Let $N_G(u)=\{u_1, u_2, u_3\}$ with its elements in a given order; Remove $u$ from $G$ and replace it with $H'$;
join $v_i'$ with  $u_i'$
The resulting graph $G'$ is a $3$CCP graph. By choosing an adequate order on the elements of $N_G(u)$, we can place $H$ in a particular face of $G$, see Figure 4. \par

\begin{figure}\label{fig:choice}
	\begin{center}
  		\includegraphics[width=0.8 \textwidth]{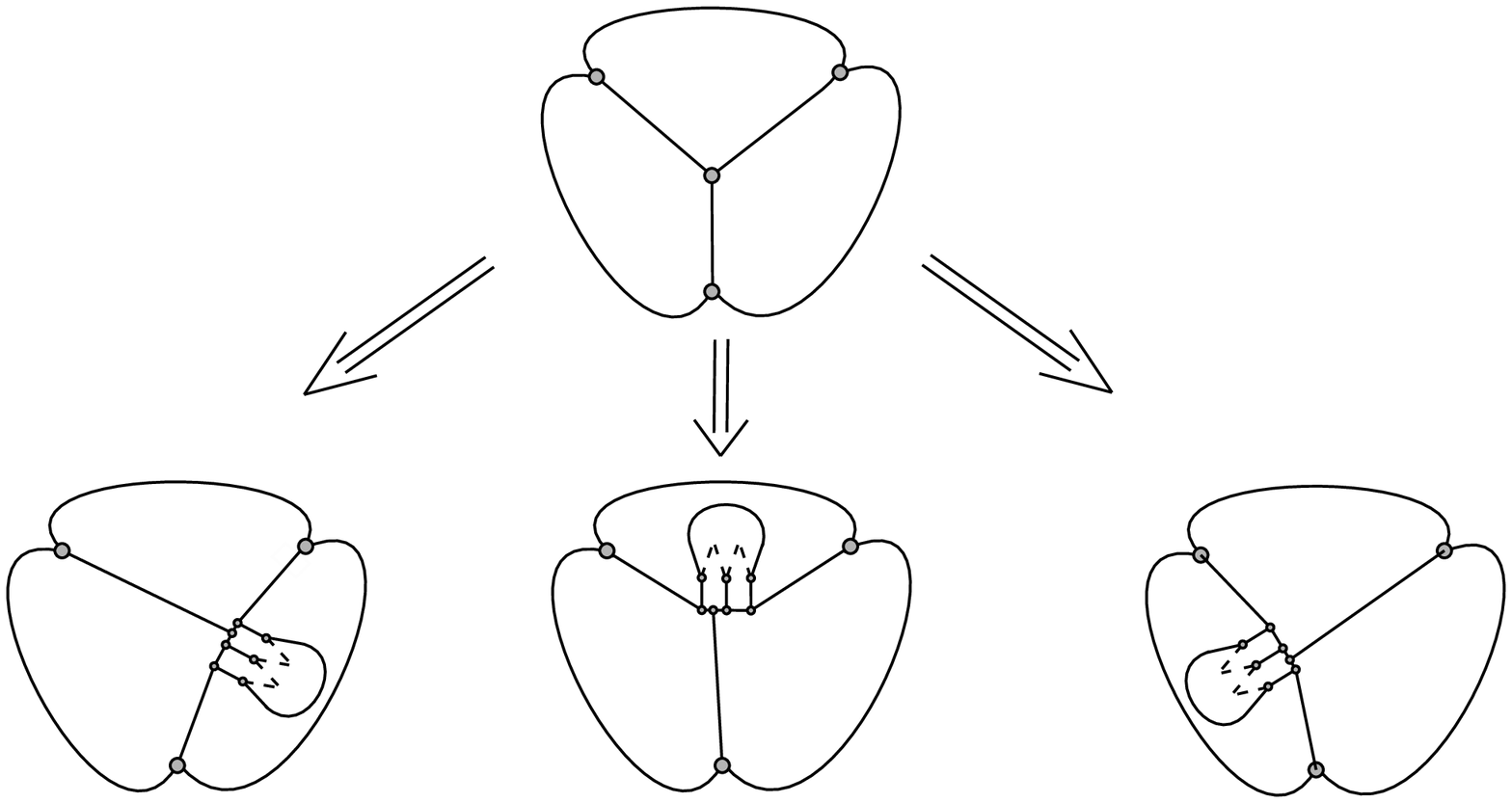}
	\end{center}
	\caption{}
\end{figure}

We use this property to place a non Hamiltonian $3$CCP graph $H$ on each face of $K_4$ (which is
a $3$CCP graph). Call the resulting graph $G$.
The faces of $K_4$ isolate each non Hamiltonian graph. 
Assume that after a removal of a face of $G$ there exists a Hamiltonian path on $G$. 
Since there are $4$ copies of $H$ on $G$
two of them may contain an endpoint of the path, one more may contain the face that was removed.
In the remaining copy of $H$ the Hamiltonian path enters and leaves $H$. This a contradiction 
since from this path we could derive a Hamiltonian cycle in $H$, see Figure 5.\par

\begin{figure}\label{fig:k4}
	\begin{center}
  		\includegraphics[width=0.5\textwidth]{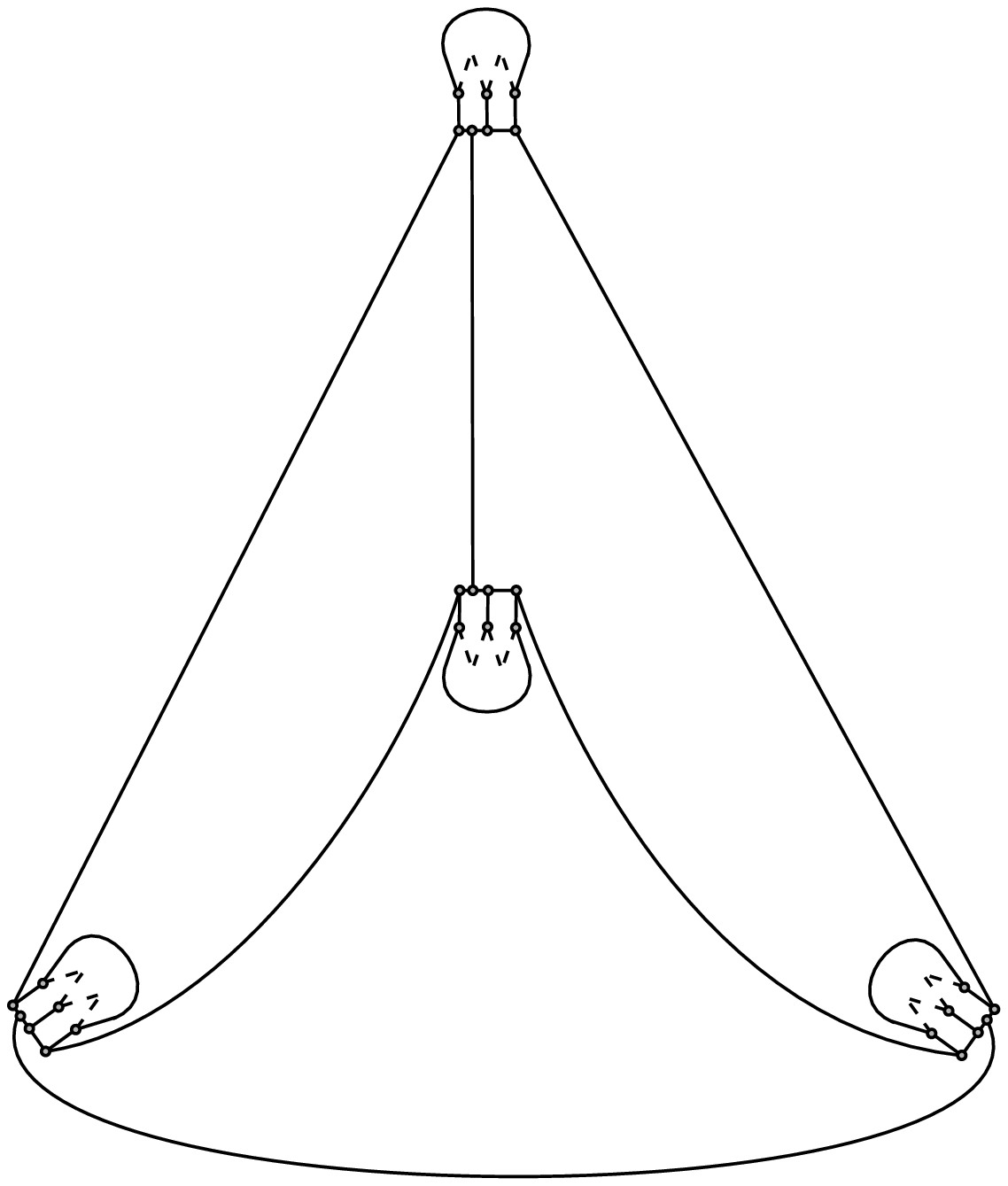}
	\end{center}
	\caption{}
\end{figure}

For $H$ in this construction we may use the smallest non Hamiltonian $3$CCP graph on $38$ vertices.
After each insertion of $H$ into $K_4$, we remove a vertex  and add $41$ new vertices.
 In total $G$ has $41 \times 4=164$
vertices. Therefore there exists a polyhedron of $\frac{164+4}{2}=84$ points in $\mathbb{R}^3$, whose dual graph is isomorphic to $G$. By the observations at the beginning of this section, there is no pulling Hamiltonian tetrahedralization of this polyhedron.\par

Using the same technique we can construct a $3$CCP graph that needs a linear number of disjoint cycles to cover its vertices. Take any $3$CCP graph and replace every vertex with a copy of the smallest
non Hamiltonian graph $H$ of $38$ vertices. In total per vertex $41$ new vertices are added. It is easy to see that any partition of the vertices into cycles in this new graph would need at least a cycle per
 copy of $H$. Therefore since our algorithm adds one Steiner point per cycle, there  are point sets
for which our algorithm adds at least $\frac{n}{41}$ Steiner points.\par

 This gives a lower bound on the number of Steiner points our algorithm might need to find a Hamiltonian tetrahedralization of a given point set.

\section{Conclusions}\label{section:conclusions}

In this paper we considered the problem of computing Hamiltonian Tetrahedralizations
of point sets in $3$-space by adding Steiner points. An algorithm was detailed to do so
for points sets with $n$ points and $m$ exterior points in time  $O(m^\frac{3}{2}) +  O(n \log n)$ .
Our algorithm adds at most $\lfloor \frac{m-2}{2}\rfloor$  Steiner points.\par

It seems natural that there must be a compromise between number of Steiner points added and
the running time of the algorithm employed to do so. A natural question would be to ask:
What is the least number of points that need to be added while maintaining an
efficient algorithm? We conjecture that a sublinear number
of Steiner points cannot be added to obtain a Hamiltonian
tetrahedralization in polynomial time.\par

Regarding the implementation of the algorithm, the algorithms we used
as a subroutines are not straight forward to program. Simpler algorithms
would also be desirable.\par

On the combinatorial side of the problem, it would be interesting to prove
that a sublinear number of Steiner points suffice to obtain a Hamiltonian
Tetrahedralization. Note that the current conjecture is that actually
none are needed. This weakening of the conjecture, nevertheless is worth
studying.\par

Finally we also showed that point sets with at most $20$ points always admit
a Hamiltonian tetrahedralization. Actually this tetrahedralization is a pulling 
tetrahedralization. We also improved the previous upper bound of $92$ to
$84$ on point sets without a Hamiltonian pulling tetrahedralization.
This gap remains to be closed.

\bibliographystyle{plain}
\bibliography{hambib}
\end{document}